 \renewenvironment{thebibliography}[1]{%
   \begin{odlthebibliography}{#1}%
     \setlength{\parskip}{0ex}%
     \setlength{\itemsep}{3pt}%
     \fontsize{9}{9} %Change these numbers to
                         %change font size
     \selectfont
}%
 {%
   \end{odlthebibliography}%
 }
\newtheoremstyle{jamiestyle}% name of the style to be used
  {4pt}% measure of space to leave above the theorem. E.g.: 3pt
  {0pt}% measure of space to leave below the theorem. E.g.: 3pt
  {\it}% name of font to use in the body of the theorem
  {0pt}% measure of space to indent
  {\sc}% name of head font
  {.}% punctuation between head and body
  { }% space after theorem head; " " = normal interword space
  {}% Manually specify head
\theoremstyle{jamiestyle}
\newtheorem{thrm}{Theorem}[section]
\newtheorem{lemm}[thrm]{Lemma}
\newtheoremstyle{jamienfstyle}% name of the style to be used
  {4pt}% measure of space to leave above the theorem. E.g.: 3pt
  {0pt}% measure of space to leave below the theorem. E.g.: 3pt
  {\normalfont}% name of font to use in the body of the theorem
  {0pt}% measure of space to indent
  {\sc}% name of head font
  {.}% punctuation between head and body
  { }% space after theorem head; " " = normal interword space
  {}% Manually specify head
\theoremstyle{jamienfstyle}
\newtheorem{defn}[thrm]{Definition}
\newtheorem{rmrk}[thrm]{Remark}
\newcommand\rulefont[1]{\ensuremath{{\mathrm{\bf (#1)}}}}
\newcommand\f[1]{\mathit{#1}}
\newcommand\atoms{\mathbb A}
\newcommand\deffont[1]{{\bf #1}}
\newcommand\powerset{\f{pset}}
\newcommand\ment[0]{\mathrel{\vDash}}
\newcommand\mact[0]{{{\cdot}{\cdot}}}
\newcommand\act[0]{{\cdot}}
\newcommand\ssm{{{:}\text{=}}}
\newcommand\liff{\Leftrightarrow}
\newcommand\Forall[1]{\forall #1.}
\newcommand\Exists[1]{\exists #1.}
\let\myfresh\#
\def\#{\ensuremath{\text{\tt\myfresh}}}
\newcommand\mnat\nat
\newcommand\seqpow\powerset
\newcommand\seqpowi\powerset
\newcommand\fv{\f{fv}}
\newcommand\vara{\text{\tt a}}
\newcommand\varb{\text{\tt b}}
\newcommand\varc{\text{\tt c}}
\newcommand\mjg[1]{} % {{\bf mjg #1}}}
\newcommand\nat{{\mathbb N}}
\newcommand\finsubseteq{\mathbin{\subseteq_{\text{\it fin}}}}
\renewcommand\land{\wedge}
\renewcommand\lor{\vee}
\newcommand\limp{\Rightarrow}
\begin{document}

%% Insert here the title of your paper
\title{Equivariant ZFA with Choice: a position paper}

\institute{Heriot-Watt University, Scotland, UK
    %\ \ {\tt gabbay.org.uk}
}

%% Use \inst to reference a created institute and the \and
%% command to divide the authors, see the commented part
\author{
	Murdoch J. Gabbay  %\inst{{\tt gabbay.org.uk}}
}

\abstract{
We propose \emph{Equivariant ZFA with Choice} as a foundation for nominal techniques that is stronger than ZFC and weaker than FM, and why this may be particularly helpful in the context of automated reasoning.
}

\maketitle

\section{Introduction}

Nominal techniques assume a set $a,b,c,\ldots\in\atoms$ of \emph{atoms}; elements that can be compared for equality but which have few if any other properties. 
This deceptively simple foundational assumption has many applications---nominal abstract syntax (syntax-with-binding)~\cite{gabbay:newaas-jv,pitts:nomsns}; as implemented in Isabelle~\cite{urban:nomrti}; an open  consistency problem~\cite{gabbay:conqnf}; duality results~\cite{gabbay:stodfo,gabbay:repdul}; generalised finiteness for automata and regular languages~\cite{kozen:comink,bojanczyk:auttns}; rewriting with binding~\cite{gabbay:nomr-jv}; and more.

So what is a foundation for nominal techniques? 

Where this question is addressed in the nominal literature, the answer given is \emph{Fraenkel-Mostowski set theory} (\deffont{FM}). 
In this position paper I will somewhat provocatively outline why this may have been a mistake, or at least a suboptimal choice.
I will propose \emph{Equivariant ZFA set theory with Choice} (\deffont{EZFAC}) instead, and suggest why EZFAC may be especially suited to applications in automated reasoning and implementation.
One standout point is that FM is inconsistent with the Axiom of Choice, whereas EZFA plus Choice (EZFAC) is consistent.

An expanded discussion of EZFAC is in~\cite{gabbay:equzfn}.

\section{Equivariant ZFA with Choice}

\begin{defn}
The language of EZFAC is the language of \emph{sets with atoms}---first-order logic with a binary predicate $\in$, and a single constant symbol $\mathbb A$ for the \deffont{set of atoms}.\footnote{An Equivariant Higher-Order Logic with Atoms and Choice would be equally feasible.  I discuss sets rather than simple types only for convenience.}
Axioms are in Figure~\ref{fig.zfa}; notation is defined below.
\end{defn}

\begin{rmrk}
Axioms \rulefont{AtmEmp} to \rulefont{Choice} are standard ZFAC (ZF with atoms and Choice).
In rule \rulefont{Choice}, $\powerset^*$ is the \emph{nonempty powerset} operator, since we cannot choose an element of the empty set.
\end{rmrk}

\begin{rmrk}
ZF and ZFA are equally expressive: a model of ZFA embeds in one of ZF,\footnote{\dots by modelling atoms as $\mathbb N$, or $\powerset(\mathbb N)$, and so forth.} and vice-versa, and a predicate in ZFA can be translated (quite easily) to one in ZF.

Yet if the translation from ZFA to ZF leads to a quadratic increase in proof-size, or if ZFA is an environment which naturally lets us express \emph{native ZFA concepts};\footnote{\dots meaning concepts that are hard to address in full generality in ZF, where we do not have atoms, but easy to see in ZFA, where we do.} 
then the gain from ZFA can be useful, as we will consider. 
\end{rmrk}

\begin{figure}
\scalebox{.8}{$
\begin{array}{l@{\quad}l@{\hspace{-6em}}l}
\rulefont{AtmEmp}&
t\in s\limp s\not\in\atoms
\\
\rulefont{EmptySet}&
t\not\in\varnothing
\\
\rulefont{Ext}&
s,s'\not\in\atoms\limp (\Forall{\varb}(\varb\in s\liff \varb\in s'))\limp s=s'
%\\
%\rulefont{Comp}&
%s\in \{\vara\in t\mid \phi\} \liff (s\in t\land\phi[\vara\ssm s])
\\
\rulefont{Pair}&
t\in\{s,s'\} \liff (t=s\lor t=s')
\\
\rulefont{Union}&
t\in\bigcup s \liff \Exists{\vara}(t\in\vara\land \vara\in s)
\\
\rulefont{Pow}&
t\in\powerset(s)\liff t\subseteq s 
\\
\rulefont{Ind}&
(\Forall{\vara}(\Forall{\varb{\in}\vara}\phi[\vara\ssm\varb])\limp\phi) \limp \Forall{\vara}\phi
&\fv(\phi)=\{\vara\}
\\
\rulefont{Inf}&
\Exists{\varc}\varnothing\in\varc\land\Forall{\vara}\vara\in\varc\limp\vara{\cup}\{\vara\}\in\varc
\\
\rulefont{AtmInf}&
\neg(\atoms\finsubseteq\atoms) 
\\
\rulefont{Replace}&
\Exists{\varb}\Forall{\vara}\vara\in\varb\liff \Exists{\vara'}\vara'\in u\land \vara=F(\vara')
\\
\rulefont{Choice}&
\varnothing\neq (\powerset^*(s)\to s) \qquad\powerset^*\text{ nonempty powerset}
\\ %[2ex]
\rulefont{Equivar} & 
\Forall{\vara{\in}\f{Perm}}(\phi\liff\vara\mact\phi) .
\end{array}
$
}
\caption{Axioms of EZFAC}
\label{fig.zfa}
\end{figure}

\begin{defn}
A \deffont{permutation} $\pi$ is a bijection on $\atoms$.
Define \deffont{permutation action} $\pi\act\vara$ by: $\pi\act\vara=\pi(\vara)$ if $\vara\in\atoms$ and $\pi\act\vara=\{\pi\act\varb\mid \varb\in\vara\}$ if $\vara\not\in\atoms$.

Then given a predicate $\phi$ in the language of set theory with atoms, define $\pi\mact\phi$ to be that predicate obtained by replacing every free variable $\vara$ with $\pi\act\vara$.
\end{defn}

\begin{rmrk}
\label{rmrk.read.equivar}
\rulefont{Equivar} asserts that validity is preserved by permuting atoms in all parameters of a predicate.
For instance if we have proved $\phi(a,b,c)$ for atoms $a,b,c\in\atoms$ then 
taking $\pi=(a\,c)$ we have $\phi(c,b,a)$ and 
taking $\pi=(a\,a')(b\,b')(c\,c')$ we have $\phi(a',b',c')$.
We do not have $\phi(a,b,a)$; this may still hold, but not by \rulefont{Equivar} because no permutation takes $(a,b,c)$ to $(a,b,a)$.

This gives atoms a dual nature.
Individually atoms point to themselves,\footnote{In the Isabelle implementation of FM in my PhD thesis~\cite{gabbay:thesis} this was literally so: I used \emph{Quine atoms} such that $a=\{a\}$.  This removes the condition $a,b\not\in\atoms$ in \rulefont{Ext}, at a cost of some extremely mild non-wellfoundedness.} 
but collectively atoms have the flavour of variables ranging permutatively over $\atoms$.\footnote{To see this made precise see Subsection~2.6 and Lemma~4.17 of \cite{gabbay:pnlthf}.} % [mjg permutative convention]} 
\end{rmrk} 

\begin{rmrk}
\label{rmrk.equivar.native}
{\bf Equivariance is a native ZFA concept}.
If our intuitions are ZF-shaped, then equivariance seems counterintuitive: ``We can't just permute elements.  Suppose atoms are numbers: then are you claiming $1<2$ if and only if $2<1$?''.
ZFA makes clear what is going on: the premise ``Suppose atoms are numbers'' makes no sense, because atoms are \emph{not} numbers!
\end{rmrk}

\section{Equivariance, choice, and freshness}

$\pi$ acts bijectively \dots 
\begin{lemm}
\label{lemm.base.equivar}
Suppose $\mathfrak M$ is a model of ZFA(C).
Then
$\mathfrak M\ment \pi\act y\in\pi\act x$ if and only if $\mathfrak M\ment y\in x$.
%\item
%$\mathfrak M\ment \pi\act x =\pi\act y$ if and only if $\mathfrak M\ment x=y$.
%\item
%$\mathfrak M\ment \pi\act x \subseteq\pi\act y$ if and only if $\mathfrak M\ment x\subseteq y$.
\end{lemm}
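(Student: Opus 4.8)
The plan is to prove the biconditional by $\in$-induction on $x$ using \rulefont{Ind}, reasoning throughout inside $\mathfrak M$ (whose axioms we may freely invoke). The forward direction and the case where $x$ is an atom are just unwinding the definition of the action, but the backward direction secretly needs \emph{injectivity} of $\pi\act(\,)$, so I would factor the argument into two parts: first an auxiliary fact that $\pi^\mone$ undoes $\pi$, then the lemma proper.

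First I would show, by $\in$-induction via \rulefont{Ind}, that $\pi^\mone\act(\pi\act z)=z$ for every $z$, so that the map $z\mapsto\pi\act z$ is injective (with left inverse $z\mapsto\pi^\mone\act z$). In the base case $z\in\atoms$: then $\pi\act z=\pi(z)\in\atoms$ since $\pi$ is a bijection on $\atoms$, and $\pi^\mone\act(\pi(z))=\pi^\mone(\pi(z))=z$. In the inductive case $z\notin\atoms$: here $\pi\act z=\{\pi\act b\mid b\in z\}$ is a genuine \emph{set}, so the set-branch of the definition applies again and $\pi^\mone\act(\pi\act z)=\{\pi^\mone\act(\pi\act b)\mid b\in z\}=\{b\mid b\in z\}=z$, using the induction hypothesis on each $\in$-smaller $b\in z$ together with \rulefont{Ext}. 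Injectivity is then immediate: $\pi\act x=\pi\act x'$ gives $x=\pi^\mone\act(\pi\act x)=\pi^\mone\act(\pi\act x')=x'$.

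With injectivity in hand I would finish by cases on whether $x$ is an atom. If $x\in\atoms$ then $\pi\act x=\pi(x)\in\atoms$, so by \rulefont{AtmEmp} neither $y\in x$ nor $\pi\act y\in\pi\act x$ can hold, and both sides of the biconditional are false. If $x\notin\atoms$ then $\pi\act x=\{\pi\act b\mid b\in x\}$, and the forward direction is immediate: $y\in x$ yields $\pi\act y\in\{\pi\act b\mid b\in x\}=\pi\act x$ straight from the definition. For the backward direction, $\pi\act y\in\pi\act x$ means $\pi\act y=\pi\act b$ for some $b\in x$, whence $y=b$ by injectivity and so $y\in x$.

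The load-bearing step is thus entirely the backward direction, and its difficulty is concentrated in establishing injectivity. The one subtlety I would flag is the claim, used in the inductive step, that $\pi\act z$ is a non-atom whenever $z$ is: this is what keeps the set-branch of the action applicable under $\pi^\mone$, and it holds because the comprehension $\{\pi\act b\mid b\in z\}$ denotes a set rather than an atom. Everything else follows directly from \rulefont{Ind}, \rulefont{Ext} and \rulefont{AtmEmp}.
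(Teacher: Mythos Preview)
Your argument is correct and is precisely an unpacking of what the paper offers: the paper's entire justification for the lemma is the phrase ``$\pi$ acts bijectively'', and your auxiliary $\in$-induction showing $\pi^{\mone}\act(\pi\act z)=z$ is exactly how one makes that phrase rigorous, after which the case split on $x\in\atoms$ versus $x\notin\atoms$ finishes things just as you describe. One small cosmetic point: your opening sentence announces an $\in$-induction \emph{on $x$} for the biconditional itself, but what you actually do (and what works) is an $\in$-induction on $z$ for the injectivity lemma followed by a plain case analysis on $x$; you may want to reword the first line to match the structure you then carry out.
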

\dots so a model of ZFA or ZFAC \emph{already} satisfies \rulefont{Equivar} \cite[Theorem~8.1.10]{gabbay:thesis}:
\begin{thrm}
\label{thrm.equivar.for.free}
\label{thrm.equivar}
If $\mathfrak M$ is a model of ZFA(C) then $\mathfrak M$ is also a model of EZFA(C).
\end{thrm}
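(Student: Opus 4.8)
The plan is to observe that a model $\mathfrak M$ of ZFA(C) already validates the axioms $\rulefont{AtmEmp}$ through $\rulefont{Choice}$ by hypothesis, so the only thing left to verify is the scheme $\rulefont{Equivar}$. Unpacking the definition of $\pi\mact\phi$, validating $\rulefont{Equivar}$ reduces to the semantic claim that, for every permutation $\pi\in\perm$, every predicate $\phi$ in the pure language of $\in$ and $\atoms$, and every valuation $\rho$ of the free variables of $\phi$, we have $\mathfrak M,\rho\ment\phi$ if and only if $\mathfrak M,\pi\act\rho\ment\phi$, where $\pi\act\rho$ sends each variable $\vara$ to $\pi\act\rho(\vara)$. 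Indeed, substituting $\pi\act\vara$ for $\vara$ and then valuating $\vara$ as $\rho(\vara)$ is the same as valuating $\phi$ with $\vara\mapsto\pi\act\rho(\vara)$, so $\mathfrak M,\rho\ment\pi\mact\phi$ iff $\mathfrak M,\pi\act\rho\ment\phi$. In short, I want to show that $x\mapsto\pi\act x$ is a truth-preserving automorphism of the structure $(\mathfrak M,\in)$.

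Before the induction I would record three facts about the action $x\mapsto\pi\act x$. First, it is total and well defined on the whole universe of $\mathfrak M$: its two defining clauses constitute an $\in$-recursion, legitimate by $\rulefont{Ind}$, and the image set $\{\pi\act\varb\mid\varb\in\vara\}$ exists by $\rulefont{Replace}$. Second, it is a bijection, with inverse $x\mapsto\pi^\mone\act x$; a routine $\in$-induction shows $\pi^\mone\act(\pi\act x)=x$. Third, it preserves atom-hood and fixes the constant: $\pi\act\vara\in\atoms$ iff $\vara\in\atoms$, directly from the two clauses, and $\pi\act\atoms=\atoms$ because $\pi$ is a bijection of $\atoms$.

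The core is then a structural induction on $\phi$ establishing $\mathfrak M,\rho\ment\phi\liff\mathfrak M,\pi\act\rho\ment\phi$. For the atomic case $\vara\in\varb$ this is exactly Lemma~\ref{lemm.base.equivar}; for $\vara=\varb$ it is injectivity of the action; for the cases involving the constant, such as $\vara\in\atoms$ and $\vara=\atoms$, it is preservation of atom-hood together with the fact that $\atoms$ is fixed. The propositional connectives are immediate from the inductive hypothesis.

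The step I expect to require the most care is the quantifier case $\phi=\Forall{\varb}\psi$. Here I would rewrite $(\pi\act\rho)[\varb\mapsto d]$ as $\pi\act(\rho[\varb\mapsto\pi^\mone\act d])$, apply the inductive hypothesis to $\psi$, and then use surjectivity of the action (fact two above) to re-index the universally quantified $d$ by $e=\pi^\mone\act d$, which ranges over the whole universe exactly as $d$ does. This re-indexing is where totality and bijectivity of the action on \emph{all} of $\mathfrak M$—not merely on $\atoms$—are essential, and it is the only place where the argument would break if $x\mapsto\pi\act x$ failed to be a genuine $\in$-automorphism of the ambient model.
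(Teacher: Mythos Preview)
Your proposal is correct and follows essentially the same route as the paper: an induction on $\phi$, with the atomic $\in$-case handled by Lemma~\ref{lemm.base.equivar} and the quantifier and $\atoms$ cases relying on bijectivity of the permutation action. You spell out in considerably more detail what the paper leaves as a sketch---the preliminary facts that $x\mapsto\pi\act x$ is a total $\in$-automorphism fixing $\atoms$, and the re-indexing argument for $\forall$---but the underlying strategy is identical.
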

\begin{proof}
We prove $\mathfrak M\ment\phi\liff\pi\mact\phi$ by induction on $\phi$.
The case of $t\in s$ is Lemma~\ref{lemm.base.equivar}.
The cases of $\Forall{\vara}\phi$ and $\atoms$ use the fact that $\pi$ is bijective.
Other cases are no harder.
\end{proof}

\begin{rmrk}
\label{rmrk.natural.axiom}
{\bf Why do we need \rulefont{Equivar}?}
While Theorem~\ref{thrm.equivar} shows how instances \rulefont{Equivar} can be derived in ZFA, in practice the cost of proving them from first principles \`a la Theorem~\ref{thrm.equivar} scales with the complexity of the predicate $\phi$.
Instances of Theorem~\ref{thrm.equivar} can quadratically dominate development effort in a theorem-prover.\footnote{This happened in the Isabelle/FM implementation in my thesis~\cite{gabbay:thesis}, and it was crippling. After my PhD I initiated a mark~2 development with an \rulefont{Equivar} axiom-scheme (actually an Isabelle Oracle).} 
This is the problem of $\alpha$-equivalence, come back to bite us.
In contrast, axiom \rulefont{Equivar} costs \emph{constant} effort: namely, the cost of invoking the axiom.\footnote{The modern nominal Isabelle implementation uses automated tactics to prove Theorem~\ref{thrm.equivar} for certain classes of $\phi$, specialised to an application to nominal inductive datatypes.  Nominal Isabelle good at what it does, but it is not (and never claimed to be) a universal nominal foundation.} 
In this sense, equivariance is a \emph{natural axiom}.
\end{rmrk}

\begin{rmrk}
\label{rmrk.Choice.1}
{\bf Choice compatible with equivariance.}
Surely arbitrary choices are inherently non-equivariant?
Not if they are made inside $\mathfrak M$.
Consider some choice-function 
$f\in\powerset^*(x)\to x$.
By Theorem~\ref{thrm.equivar} we immediately obtain
$\pi\act f\in \powerset^*(\pi\act x)\to \pi\act x$.
In words: if $f$ is a choice function for $x$ in $\mathfrak M$ then by equivariance $\pi\act f$ is a choice function for $\pi\act x$ in $\mathfrak M$.
We just permute atoms pointwise in the choice functions.

So the following are consistent with EZFA and derivable in EZFAC:
``There exists a total ordering on $\mathbb A$'';
``Every set can be well-ordered (even if the set mention atoms)''.
\end{rmrk}

\begin{rmrk}
FM set theory has a \emph{finite support} property that for every $x$ there exists $A\subseteq_{fin}\atoms$ such that if $\Forall{a{\in}A}\pi(a)=a$ then $\pi\act x=x$.

I argue that it is better to 
present freshness as a \emph{well-behavedness} property in the larger EZFA(C) universe.
This is for several reasons:
\\
1. Presenting \rulefont{Fresh} as a well-behavedness property instead of an axiom eliminates the `But FM is inconsistent with Choice' objection to nominal techniques.
Anything we can do in FM, we can very easily do in EZFAC just by imposing finite support.  Choice functions need not have finite support, but (unlike is the case for FM) they still exist in the same EZFAC universe.
\\
2. We sometimes specifically want non-supported elements; for example two recent papers \cite{gabbay:stodfo,gabbay:repdul} are concerned with sets that have a notion of nominal support, but whose elements do not.
\\
3. Support is not a hereditary property; e.g. `the set of all well-orderings of atoms' is supported by $\varnothing$, but no well-ordering of $\atoms$ has finite support; or put another way, the FM universe is a proper subclass of the universe of finitely-supported elements. 
\\
4. Even if the reader's next paper uses FM sets, it may be helpful for exposition to observe the natural embedding of the FM universe inside the EZFAC universe.
Of course this embedding is obvious, but only \emph{once it is pointed out}. 
\end{rmrk}

\section{Conclusions}

Nominal techniques have developed considerably since the original work~\cite{gabbay:newaas-jv} yet their foundations have not been critically revisited.
Authors and implementors have generally used FM or ZF(C), if foundations are explicitly considered.
Yet there is a sense in which FM is too strong, and ZF(C) is too weak.
Though these theories are all biinterpretable, that is not enough.
We need foundations that allow us express ourselves precisely, naturally, and without annoying (or crippling) proof-obligations to do with renamings.

In this respect EZFAC seems to have advantages. 
We have Choice, invoking equivariance has constant cost---and just the clear statement of EZFAC itself will I hope be conceptually useful. 

\newcommand\href[2]{#2}
%\bibliography{JamieBIB}
%%\bibliographystyle{alpha}

\hyphenation{Mathe-ma-ti-sche}

\end{document}